\newtheorem{theorem}{\sc Theorem}[section]
\newtheorem{prop}[theorem]{\sc Proposition}
\newtheorem{definition}{Definition}[section]
\newtheorem{example}{Example}[section]
\newtheorem{corollary}[theorem]{\sc Corollary}
\def\C{\mathbb{C}}
\title{Degrees of Freedom and Secrecy in Wireless Relay Networks}
\author{Arsenia Chorti\footnote{Arsenia Chorti is with the Department of Computer Science and Electrical Engineering, University of Essex, Colchester, United Kingdom. email: {achorti@essex.ac.uk}}, Ragnar Freij\footnote{Ragnar Freij is with the Department of Communications and Networking, Aalto University, Espoo, Finland.  email: {ragnar.freij@aalto.fi}}, and David Karpuk\footnote{David Karpuk is with the Department of Mathematics and Systems Analysis, Aalto University, Espoo, Finland.  email: {david.karpuk@aalto.fi}.  The last author is supported by Academy of Finland grant 268364.  This work was partially completed while D.\ Karpuk was a guest of A.\ Chorti at the University of Essex, and thus D.\ Karpuk would like to extend his gratitude to the University of Essex for the excellent hospitality.}}
\date{}
\begin{document}

\maketitle

\begin{abstract}
We translate the problem of designing a secure communications protocol for several users communicating through a relay in a wireless network into understanding certain subvarieties of products of Grassmannians.  We calculate the dimension of these subvarieties and provide various results concerning their defining equations.  When the relay and all of the users have the same number of antennas, this approach places fundamental limits on the amount of data that can be passed through such a network.  
\end{abstract}

\section{Introduction}

Applications of algebra and algebraic geometry to communications engineering traditionally appears in terms of error-correcting codes.  More specifically, error-correcting codes constructed from vector bundles on curves over finite fields are well-known to have desirable properties.  More recently, the advent of wireless communications has opened up new avenues for applied algebraic geometry, which are much less well-tread.  For example, minimizing interference in certain channels can be reduced to cleverly picking subspaces of a given vector space, necessitating the understanding of subvarieties of products of Grassmannians; see \cite{bct}.  In the present article we apply some of these ideas to networks in which users are communicating through a relay.

We will be concerned specifically with transmitting data through a relay, a common scenario in which users who lack a direct link with each other send their information to the relay, who then forwards the information back to the users.  Furthermore, we consider situations in which the relay is untrustworthy, and thus the users would like to pass information through the relay while simultaneously minimizing the amount of information the relay can access.  The relevant wireless communications background will be introduced and explained as is necessary in what follows, and the relevant channel models can be found in, for example, \cite{heath_relay,yener_lattice,ChortiICCC14}.  

Information can be kept secret from a relay by arranging a communications protocol so that the relay only observes certain sums of transmitted data; see \cite{yener_lattice} as well as the articles \cite{ChortiCISS12,ChortiICCC14}.  Effective transmission between two users means minimizing the interference coming from the other users communicating through the same relay.  In algebraic terms, this amounts to pairs of users simultaneously choosing subspaces of a given vector space, subject to the conditions that the secrecy is maintained and the interference is minimized.

In what follows, we translate the problem of designing a secure communications protocol for several users communicating through a relay into understanding certain subvarieties of products of Grassmannians.  We calculate the dimension of these subvarieties and study their defining equations, which give us a hint as to the difficulty of constructing such a protocol.  When the relay and all of the users have the same number of antennas, this approach classifies the available degrees of freedom completely, thus placing fundamental limits on the amount of data that can be passed securely through such a network.  Our approach is largely inspired by \cite{bct}, wherein the authors provide limits on the degrees of freedom in certain interference channels.  


\subsection{Two users}

Suppose we have two users, Alice (user $1$) and Bob (user $2$) who wish to communicate wirelessly with each other, but the only direct link is through a relay, Ray.  The data Alice and Bob wish to send to each other can be represented by complex numbers $x_1,x_2\in\C$, respectively, called \emph{information symbols}.  Alice and Bob transmit this information to Ray, who then forwards it back to both Alice and Bob.  This simple network can be modeled by the following equations (see \cite{heath_relay}):
\begin{align}
r &= h_1x_1 + h_2x_2 + z\\
\tilde{y}_1 &= g_1r = g_1h_1x_1 + g_1h_2x_2 + g_1z + w_1\\
\tilde{y}_2 &= g_2r = g_2h_1x_1 + g_2h_2x_2 + g_2z + w_2
\end{align}
where $h_1, h_2 \in \C$ are the \emph{channel coefficients} which model the effect of the channel from Alice to Ray and Bob to Ray, respectively.  Similarly, $g_1,g_2\in \C$ are the channel coefficients from Ray to Alice and Bob, respectively.  Here, $z$, $w_1$, and $w_2$ represents the additive noise at Ray, Alice, and Bob, which are i.i.d.\ zero-mean circularly symmetric complex Gaussian random variables. 

We assume that knowledge of $h_i$ and $g_i$ is globally available.  In this case, Alice and Bob can subtract off their own data and arrive at the observations
\begin{align}
y_1 &= g_1h_2x_2 + g_1z + w_1 \\
y_2 &= g_2h_1x_1 + g_2z + w_2
\end{align}
The $x_i$ are typically taken from a finite subset $\mathcal{X}\subset \C$ called a \emph{codebook} or \emph{constellation}.  Since the noise is zero mean, Alice and Bob can now decode the other's data by solving
\begin{align}
\hat{x}_2 &= \text{arg}\min_{x\in\mathcal{X}}|g_1h_2x - y_1|^2 \\
\hat{x}_1 &= \text{arg}\min_{x\in\mathcal{X}}|g_2h_1x - y_2|^2
\end{align}
which provides a maximum-likelihood estimate of $x_1$ and $x_2$.  The decoding is successful if $\hat{x}_i = x_i$, and the success rate depends on the variance of the noise.

In general, the relay cannot reliably decode $x_1$ and $x_2$ as it only has access to a linear combination, and essentially would have to attempt to solve one equation in two variables to decode.  If Alice and Bob pre-multiply their symbols $x_i$ by some $u_i\in\C$ such that $h_1u_1 = h_2u_2 = h$, then the observation at the relay will be
\begin{equation}
r = h_1u_1x_1 + h_2u_2x_2 + z = h(x_1 + x_2) + z
\end{equation}
which guarantees that the relay can only observe the sum $x_1 + x_2$.  Typically there will be multiple pairs $(x_1,x_2)$ which give the same sum $x_1 + x_2$.  For example, let us take our constellation to be $\mathcal{X} = \{\pm 1,\pm i\}$.  Suppose that Alice and Bob happen to send $x_1$ and $x_2$ such that $x_1 + x_2 = 0$.   Even if Ray can correctly decode the sum of the symbols to be $0$, he must still guess uniformly at random between the four possible combinations $(1,-1)$, $(-1,1)$, $(i,-i)$, and $(-i,i)$ of transmitted signals.  This general principle can be exploited to achieve non-trivial secrecy in the case of an untrustworthy relay, see \cite{yener_lattice,ChortiICCC14}.

\section{Multi-user communication through a relay}

In the above setup with only two users, each of whom has only one antenna, only one information symbol was exchanged between the two users.  In what follows we add more users and more antennas, both for the users and the relay, and study how this affects the amount of data that can be passed through such a network.  Informally, we are asking the following questions: \emph{What is the maximal $\sum_i d_i$ so that user $i$ can send and receive $d_i$ information symbols, while the relay observes only linear combinations of information symbols?  How does $\sum_i d_i$ depend on the number of relay antennas and the number of user antennas?}

\subsection{The general system model}

We again refer to \cite{heath_relay,tse_book} and the references therein for the equations below which model our channels.  Suppose that we have $K$ users that wish to communicate with each other through a relay, all of whom possibly have more than one antenna.  In general, effective transmission/encoding and reception/decoding of $d$ information symbols at user $i$ requires that user $i$ has $N\geq d$ antennas; essentially each antenna gives a linear equation in the $d$ information symbols, see \cite{tse_book} for more information.  A channel between a user with $N$ antennas and a relay with $M$ antennas is essentially a matrix $H\in \C^{M\times N}$, which we make precise in what follows.

The $i^{th}$ user desires $d_i$ degrees of freedom, that is, they will be transmitting a vector $x_i \in \C^{d_i}$ of information symbols.  We assume that all users have $N$ antennas, that $N\geq d_i$ for all $i$ to allow for linear encoding.  User $i$ uses an $N \times d_i$ encoding matrix $U_i$ to compute a vector $U_ix_i$ of encoded symbols.  We further assume that the relay has $M\geq N$ antennas.

The channel from user $i$ to the relay will be denoted by $H_i\in \C^{M \times N}$, and the channel from the relay to user $k$ by $G_k\in \C^{N\times M}$.  We assume all channel state information is available to all users and all relays, and that $H_i$ and $G_k$ are selected according to some continuous distribution, so that, for example, all channel matrices are generically invertible.

The users transmit their data to the relay simultaneously, so that the relay observes
\begin{equation}
r = \sum_{i = 1}^K H_iU_ix_i + z.
\end{equation}
where $z\in \C^{M\times 1}$ is a column vector of i.i.d.\ zero-mean circularly symmetric complex Gaussian random variables.  The relay then transmits the received message back to the users, so that user $k$ observes
\begin{align}
\tilde{y}_{k} = G_kr = \sum_{i = 1}^KG_kH_iU_ix_i + G_kz + w_k
\end{align}
where $w_k\in \C^{N\times 1}$ is a column vector of i.i.d.\ zero-mean circularly symmetric complex Gaussian random variables.  Since channel information is globally available and user $k$ knows their own data, they can subtract off the quantity $G_kH_kU_kx_k$ and obtain
\begin{equation}
y_k = \sum_{\substack{i = 1 \\ i\neq k}}^K G_kH_iU_ix_i + G_kz + w_k
\end{equation}
from which they attempt to decode some subset of the coordinates of $\{x_1,\ldots,x_K\}$, depending on what data they require.

\subsection{The general strategy}

Let us denote the $M$-dimensional vector space at the relay by $V$.   For each $i = 1,\ldots,K$, we pick for user $i$ subspaces $V_i = \text{colspan}(H_iU_i)$ of $V$, such that $\dim V_i = d_i$ for all $i$.  The space $V_i$ is the target for all of user $i$'s symbols, and the target for any other user sending symbols to user $i$.  So that no symbol arrives in $V_i$ ``unmasked'' and that the symbols arriving from different users are guaranteed to be independent, we insist that
\begin{equation}
V_i = \bigoplus_{\substack{j = 1 \\ j\neq i}}^KV_i \cap V_j
\end{equation}
The relay thus only observes non-trivial linear combinations of information symbols, and thus cannot reliably decode any symbols transmitted to or from user $i$.  

To make full use of all of the $M$ antennas at the relay, we further insist that $V = \sum_i V_i$, so that all of the space at the relay can be occupied by a linear combination of vectors from the users.  To avoid overlapping triples of symbols, we want to further decompose $V$ as
\begin{equation}\label{relay_decompose}
V = \bigoplus_{i = 1}^{K-1}\bigoplus_{j = i + 1}^K V_i \cap V_j
\end{equation}
For user $i$ to decode, we first write
\begin{equation}
V = V_i \oplus I_i,\quad \text{where}\quad  I_i = \bigoplus_{\substack{1\leq j< k\leq K \\ j,k\neq i}}V_j \cap V_k
\end{equation}
is the $i^{th}$ user's interference space, coming from the terms in (\ref{relay_decompose}) which do not involve $i$.  If $G_i$ denotes the channel matrix from the relay to the $i^{th}$ user, then the observation at user $i$ lies in
\begin{equation}\label{sig_int}
G_iV = G_iV_i \oplus G_iI_i
\end{equation}
Since our $G_i$ is assumed to be invertible with probability $1$, we can assume that $\dim(G_iV_i) = d_i$, and hence user $i$ can still linearly decode all of their information symbols.

To decode the symbols contained in $G_iV_i$, user $i$ must first project onto the perp space of $G_iI_i$.  Let $P_i$ denote this projection, and let us, by slight abuse of notation, identify $V_i$ with a matrix whose columns are an orthonormal basis of $V_i$.  The strength of the signal at user $i$ is then dependent on the squared Frobenius norm $||P_iG_iV_i||_F^2$, and the strength of the noise at user $i$ depends on $||P_iG_iz||_F^2 + ||P_iw_i||^2_F$.  The success of user $i$'s decoding is then dependent on the \emph{signal-to-noise ratio}
\begin{equation}
\text{SNR}_i = \frac{||P_iG_iV_i||_F^2}{||P_iG_iz||_F^2 + ||P_iw_i||^2_F}
\end{equation}
Ideally one would like to maximize all of the $\text{SNR}_i$ independently of each other.  However, note that $P_i$ depends on $I_i$ and thus all $V_j$ for $j\neq i$, so one cannot a priori optimize the $\text{SNR}_i$ independently.  Such optimization problems are generally very difficult and depend on the dimension of the space of all strategies, which motivates us to understand the space of all feasible strategies as explicitly as possible.

\begin{example}
Let us demonstrate our strategy with $K = N = M = 3$ and $d_1 = d_2 = d_3 = d = 2$.  Choose a basis $\{v_1,v_2,v_3\}$ of $V$, and set
\begin{equation}
V_1 = \text{span}(v_1,v_2), \quad
V_2 = \text{span}(v_2,v_3), \quad
V_3 = \text{span}(v_1,v_3)
\end{equation}
so that
\begin{equation}
V = (V_1\cap V_3)\oplus (V_1\cap V_2) \oplus (V_2 \cap V_3)
\end{equation}
and pick the encoding matrices $U_i$ so that $(H_iU_i)^{(i)} = v_i$ and $(H_iU_i)^{(i+1)} = v_{i + 1}$, where $A^{(i)}$ means the $i^{th}$ column of the matrix $A$.

Now suppose that $x_i = [x_i^1,x_i^2]^t\in \C^d$ is the vector of information symbols that the $i^{th}$ node wants to send.  Given the above encoding matrices, the observation at the relay can be expressed in the basis $\{v_1,v_2,v_3\}$ as:
\begin{equation}
r = H_1U_1x_1 + H_2U_2x_2 + H_3U_3x_3 + z
= \begin{bmatrix}
x_1^1 + x_3^2 \\
x_1^2 + x_2^1 \\
x_2^2 + x_3^1
\end{bmatrix}
+ z
\end{equation}
Once the relay forwards on this information to the users, we can write user $1$'s observation as
\begin{equation}
\tilde{y}_1 = G_1r = G_1 \begin{bmatrix}
x_1^1 + x_3^2 \\ x_1^2 + x_2^1 \\ 0
\end{bmatrix}
+
G_1 \begin{bmatrix}
0 \\ 0 \\ x_2^2 + x_3^1
\end{bmatrix}
+ g_1z + w_1
\end{equation}
Since user $1$ knows $x_1^1$ and $x_1^2$ as well as the matrix $G_1$, they can subtract off $G_1[x_1^1 \ x_1^2 \ 0]^t$ from $\tilde{y}_1$ and arrive at
\begin{equation}
y_1 = G_1 \begin{bmatrix}
x_3^2 \\ x_2^1 \\ 0
\end{bmatrix}
+
G_1 \begin{bmatrix}
0 \\ 0 \\ x_2^2 + x_3^1
\end{bmatrix}
+ g_1z + w_1
\end{equation}
User $1$ can now recover $x_2^1$ and $x_3^2$ by projecting $y_1$ onto the two-dimensional perp space of $G_1v_3$.  Similarly, user $2$ can recover $x_1^2$ and $x_3^1$, and user $3$ can recover $x_2^2$ and $x_1^1$. \hfill $\blacksquare$
\end{example}

Note that in the above example, we picked not only the subspaces $V_i$ but also specified bases.  This reflects additional flexibility available to the users, who may prefer one basis over another to minimize the probability that the relay can decode their information.  However, this is a subtlety we shall ignore as it does not affect the degrees of freedom.

\section{Feasible strategies for $M = N$}

Throughout this section we assume that the relay has $M = N$ antennas.  We first investigate the resulting parameters defining feasible strategies, and see that this immediately places rather strong conditions on $\sum_i d_i$.  We then study in depth the resulting variety of feasible strategies.

\subsection{Feasible tuples}

Our general strategy in the previous section motivates the following definition and theorem, which tells us for what parameters there exists a viable communications strategy.

\begin{definition}\label{feas}
A tuple $(K,N,d_1,\ldots,d_K)$ is \emph{feasible} if given a $N$-dimensional complex vector space $V$, there exist subspaces $V_1,\ldots,V_K$ satisfying
\begin{itemize}
\item[(i)] $\dim(V_i) = d_i$ for all $i=1,\ldots,K$,
\item[(ii)] $V_i = \bigoplus_{j\neq i} V_i\cap V_j$ for all $i = 1,\ldots, K$, and 
\item[(iii)] $V = \bigoplus_{1\leq i < j \leq K} V_i\cap V_j$
\end{itemize}
If $(K,N,d_1,\ldots,d_K)$ is a feasible tuple, we will call a collection $(V_1,\ldots,V_K)$ of subspaces of $V$ satisfying (i), (ii), and (iii) a \emph{feasible strategy}.
\end{definition}

\begin{theorem}\label{feas_thm_tuple}
The tuple $(K,N,d_1,\ldots,d_K)$ is feasible if and only if
\begin{equation}
\sum_{i = 1}^Kd_i = 2N,
\end{equation}
and $d_i\leq N$ for each $i$.
\end{theorem}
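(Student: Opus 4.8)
The plan is to prove both directions by attaching to a strategy $(V_1,\ldots,V_K)$ the symmetric array of intersection dimensions $w_{ij} = \dim(V_i \cap V_j)$, thereby reducing feasibility to a purely combinatorial question about realizing a prescribed degree sequence by a loopless multigraph.

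For the forward direction, suppose $(V_1,\ldots,V_K)$ is a feasible strategy and set $w_{ij} = \dim(V_i \cap V_j)$ for $i \neq j$, so that $w_{ij} = w_{ji}$. Condition (iii) exhibits $V$ as an internal direct sum of the $V_i \cap V_j$ over $i < j$, so taking dimensions gives $\sum_{i<j} w_{ij} = N$. Condition (ii) exhibits each $V_i$ as an internal direct sum of the $V_i \cap V_j$ over $j \neq i$, so $d_i = \sum_{j \neq i} w_{ij}$. Summing over $i$ and noting that each unordered pair $\{i,j\}$ is counted exactly twice yields $\sum_i d_i = 2\sum_{i<j} w_{ij} = 2N$. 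Finally $d_i = \dim V_i \leq \dim V = N$ since $V_i \subseteq V$. This direction is essentially immediate dimension-counting.

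For the converse I would first solve the combinatorial problem: given nonnegative integers $d_1,\ldots,d_K$ with $\sum_i d_i = 2N$ and each $d_i \leq N$, produce nonnegative integers $w_{ij} = w_{ji}$ (with $w_{ii} = 0$) satisfying $\sum_{j \neq i} w_{ij} = d_i$; summing these constraints forces $\sum_{i<j} w_{ij} = N$ automatically. This is exactly the statement that a loopless multigraph on $K$ vertices with degree sequence $(d_1,\ldots,d_K)$ exists, and I would prove it by the standard half-edge pairing: list $2N$ half-edges in one sequence, the first $d_1$ belonging to user $1$, the next $d_2$ to user $2$, and so on, then pair the half-edge in position $p$ with the one in position $p+N$ for $p=1,\ldots,N$. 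Two half-edges of the same user occupy a contiguous block of length $d_i \leq N$, and positions differing by exactly $N$ cannot both lie in such a block; hence no pair joins a user to itself, and the number of pairs between users $i$ and $j$ gives the desired $w_{ij}$. Pinning down that $d_i \leq N$ is precisely the right hypothesis here (it is equivalent to the max-degree condition $d_1 \leq \sum_{i \geq 2} d_i$ once $\sum_i d_i = 2N$) is the crux of the whole argument.

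Given such $w_{ij}$, the construction of a feasible strategy is then direct. I would take a basis of the $N$-dimensional space $V$ partitioned into blocks $B_{ij}$ of size $w_{ij}$, one for each pair $i < j$, set $W_{ij} = \text{span}(B_{ij})$, and define $V_i = \bigoplus_{j \neq i} W_{ij}$. Because the blocks are spanned by disjoint sets of basis vectors, one checks immediately that $\dim V_i = \sum_{j \neq i} w_{ij} = d_i$, that $V_i \cap V_j = W_{ij}$ (the shared basis vectors are exactly those in $B_{ij}$), and that the direct-sum conditions (ii) and (iii) then hold verbatim. The only genuine work is the multigraph realization above; the verification of (i)--(iii) for this explicit model is routine.
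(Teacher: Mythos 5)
Your proposal is correct and is essentially the paper's own argument in different clothing: the forward direction is the same dimension count via $d_{ij}=\dim(V_i\cap V_j)$ (your double-counting is just a cleaner version of the paper's telescoping sum), and your half-edge pairing --- contiguous blocks of length $d_i\leq N$ among $2N$ positions, pairing position $p$ with $p+N$ --- is exactly the paper's converse construction, which realizes the same pairing as a projection $P:W\to V$, $v_{N+k}\mapsto v_k$, from a $2N$-dimensional space whose basis vectors are assigned to users in consecutive blocks. The multigraph degree-sequence framing is a nice repackaging, but not a genuinely different route.
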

\begin{proof}
Let us first assume that we have a feasible tuple and write $V = \bigoplus_{1\leq i < j\leq K}V_i\cap V_j$.  Let $d_{ij} = \dim(V_i\cap V_j)$.  The assumption $V_i = \bigoplus_{j\neq i}V_i\cap V_j$ implies that $d_i = \dim(V_i) = \sum_{j\neq i} d_{ij}$.  Using this, we have
\begin{align}
\dim(V) & = \sum_{i = 1}^{K-1}\sum_{j = i+1}^K d_{ij} \\
& = \sum_{i=1}^{K-1}\left(d_i - \sum_{j=1}^{i-1} d_{ji}\right) \\ 
& = \sum_{i = 1}^{K-1}d_i - (\dim(V) - \sum_{j=1}^{K-1}d_{jK}) \\
& = \sum_{i = 1}^{K-1}d_i - (\dim(V) - d_K) \\
& = \sum_{i = 1}^K d_i - \dim(V)
\end{align}
Rearranging $\dim(V) = \sum_i d_i - \dim(V)$ gives the result, since $\dim(V) = N$.

To prove the converse, assume that $\sum_i d_i = 2N$. Let $V$ be a complex vector space of dimension $N$ with basis $v_1,\ldots,v_N$, and let $W\supseteq V$ be a complex vector space of dimension $2N$ with basis $v_1,\ldots,v_{2N}$. Let $P:W\to V$ be the projection given by $v_{N+k}\mapsto v_k$ for $k=1,\cdots ,N$.    If we set $D_i=\sum_{j=1}^i d_j$, we can construct the spaces $W_i\subseteq W$ explicitly as $W_i=P\text{span}\left(v_{D_{i-1}+1},\cdots,v_{D_i}\right)$. As $d_i\leq N$, the projection $P$ is injective on each $W_i$, so we can assign $V_i=P(W_i)$. Now, it is clear by construction that $\dim V_i=d_i$, $V_i = \oplus_{j\neq i}V_i\cap V_j$, and that $V=\bigoplus_{1\leq i < j\leq K}V_i\cap V_j$.
\end{proof}

\begin{prop}\label{suff_prop}
Suppose we are given a feasible tuple $(K,N,d_1,\ldots,d_K)$ and subspaces $V_1,\ldots,V_K$ of $V$ such that
\begin{itemize}
\item[(i)] $\dim(V_i) = d_i$ for all $i = 1,\ldots,K$,
\item[(ii)] $V_i = \bigoplus_{j\neq i}V_i\cap V_j$ for all $i = 1,\ldots,K$, and
\item[(iii)] $V_i\cap V_j\cap V_k = 0$ for all distinct triples $(i,j,k)$.
\end{itemize}
Then $(V_1,\ldots,V_K)$ is a feasible strategy.
\end{prop}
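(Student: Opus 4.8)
The conditions (i) and (ii) assumed here are literally the first two requirements of Definition~\ref{feas}, so the only thing to prove is the global decomposition
\[
V = \bigoplus_{1\le i<j\le K} V_i\cap V_j .
\]
The plan is to separate this into a directness statement and a dimension count. Writing $d_{ij}=\dim(V_i\cap V_j)$, the directness in (ii) gives $d_i=\sum_{j\ne i} d_{ij}$ for every $i$, and summing over $i$ (so that each unordered pair is counted twice) yields $\sum_{i<j} d_{ij}=\tfrac12\sum_i d_i = N$, the last equality being feasibility via Theorem~\ref{feas_thm_tuple}. Thus $\sum_{i<j}(V_i\cap V_j)$ is a subspace of $V$ whose summands have total dimension $\dim V$, so it suffices to prove that this sum is \emph{direct}; equality with $V$ is then automatic by comparing dimensions. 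Notice that feasibility is used only here, to turn directness into filling all of $V$; the directness itself should follow from (ii) and the triple-intersection hypothesis (iii) alone.

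So the heart of the proof is to show that any relation $\sum_{i<j} w_{ij}=0$ with $w_{ij}\in V_i\cap V_j$ has all $w_{ij}=0$, and this is exactly where hypothesis (iii) is needed. The model is $K=3$: from $w_{12}+w_{13}+w_{23}=0$ one writes $w_{12}+w_{13}=-w_{23}$, whose left-hand side lies in $V_1$ and whose right-hand side lies in $V_2\cap V_3$, so $w_{12}+w_{13}\in V_1\cap V_2\cap V_3=0$ by (iii); the directness of $V_1=(V_1\cap V_2)\oplus(V_1\cap V_3)$ from (ii) then forces $w_{12}=w_{13}=0$, and hence $w_{23}=0$. This template uses only (ii) and (iii), and it is this interplay that I would try to propagate to general $K$.

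For general $K$ I would induct on the number of users, peeling off the last index. Grouping a relation as $\sum_{i<K} w_{iK}=-\sum_{i<j<K} w_{ij}$ places the left-hand ``star'' in $V_K$, and (ii) for $V_K$ identifies its summands as the individual $w_{iK}$. If the star vanishes, then (ii) for $V_K$ kills every $w_{iK}$, and the remaining relation lives among $V_1,\dots,V_{K-1}$, where it can be closed inductively. The one delicate point is that the subfamily $V_1,\dots,V_{K-1}$ does not inherit (ii), since each $V_i$ loses its $V_i\cap V_K$ summand; I would repair this by passing to the truncated spaces $\widehat V_i:=\bigoplus_{j<K,\,j\ne i}(V_i\cap V_j)$, for which (ii) gives $\widehat V_i\cap\widehat V_j=V_i\cap V_j$ and $\widehat V_i\cap V_K=0$, so the $\widehat V_i$ again satisfy (ii) and (iii) and the induction applies. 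After this reduction the entire difficulty is concentrated in a single independence statement, namely
\[
V_K\cap\sum_{1\le i<j<K}(V_i\cap V_j)=0,
\]
which says precisely that the star cannot be cancelled against intersections avoiding $K$. This is the step I expect to be the main obstacle: it is a distributivity property of the configuration that is false without (iii) (for three subspaces in general position condition (ii) already fails and spurious relations appear), and I would establish it by a secondary induction that again exploits the triple-intersection hypothesis to split off one vertex at a time. Granting this, the dimension count of the first paragraph finishes the proof.
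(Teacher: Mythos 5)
You correctly isolate the crux of the matter: once $\sum_{i<j}\dim(V_i\cap V_j)=\tfrac12\sum_i d_i=N$ is extracted from (ii) and feasibility, everything reduces to directness of $\sum_{i<j}(V_i\cap V_j)$, and your $K=3$ argument for that directness is complete and correct. The gap is exactly the step you flag and leave unproved: the independence statement $V_K\cap\sum_{1\le i<j<K}(V_i\cap V_j)=0$. No ``secondary induction'' can establish it, because it is false for $K\ge 4$. Concretely, take $K=4$, $N=6$, $d_1=\cdots=d_4=3$ (a feasible tuple), let $e_1,\ldots,e_6$ be a basis of $V=\C^6$, and set
\begin{equation*}
v_{12}=e_1,\quad v_{13}=e_2,\quad v_{14}=e_3,\quad v_{23}=e_4,\quad v_{24}=e_5,\quad v_{34}=-(e_1+e_2+e_3+e_4+e_5),
\end{equation*}
with $V_i=\mathrm{span}(v_{ij}\colon j\ne i)$. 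Every linear relation among the six vectors $v_{ij}$ is a scalar multiple of $\sum_{i<j}v_{ij}=0$, whose coefficients are all nonzero; hence any relation omitting one of the $v_{ij}$ is trivial. From this one checks directly that $\dim V_i=3$, that $V_i\cap V_j=\C v_{ij}$ for every pair, so that (i) and (ii) hold, and that all triple intersections vanish, so that (iii) holds. Yet $\sum_{i<j}(V_i\cap V_j)=\mathrm{span}(e_1,\ldots,e_5)$ is only five-dimensional: the sum is neither direct nor all of $V$. In particular your key step fails here, since $e_1+e_2+e_4=-(v_{34}+e_3+e_5)$ is a nonzero element of $V_4\cap\bigl((V_1\cap V_2)+(V_1\cap V_3)+(V_2\cap V_3)\bigr)$.

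This is not a repairable defect of your particular route: the same example satisfies every hypothesis of Proposition~\ref{suff_prop}, so the Proposition as stated is false for $K\ge 4$. The paper's own proof breaks at the corresponding spot: it deduces that $W=\sum_{i<j}V_i\cap V_j$ is a direct sum from the fact that the summands $V_i\cap V_j$ have pairwise trivial intersections, an implication that is invalid for more than two summands (three distinct lines in a plane already violate it). Your instinct that directness is the whole content, together with your correct $K=3$ computation, in fact delimits where the statement is true: for $K\le 3$, conditions (ii) and (iii) do force directness, and this is precisely the case invoked later in the proof of Theorem~\ref{feas_thm}; for $K\ge 4$ the hypotheses must be strengthened, e.g.\ by adding $V=\sum_i V_i$. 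With that extra hypothesis, (ii) gives $\sum_i V_i=\sum_{i<j}(V_i\cap V_j)$, so this sum has dimension $N$ while the dimensions of its summands also total $N$ by your count; the sum is therefore automatically direct and the conclusion follows, with (iii) playing no role at all.
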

\begin{proof}
Let $W = \sum_{1\leq i < j \leq K} V_i\cap V_j \subseteq V$.  By (iii) we have $V_i\cap V_j \cap V_k \cap V_l = 0$ for all quadruples of indices $(i,j,k,l)$, unless $i = k$ and $j = l$.  It follows that $W$ decomposes as a direct sum,
\begin{equation}
W = \bigoplus_{1\leq i < j \leq K}V_i\cap V_j \subseteq V
\end{equation}
from which it follows that $(K,\dim(W),d_1,\ldots,d_K)$ is a feasible tuple.  By Theorem \ref{feas_thm} we must have $\sum_i d_i = 2\dim(W)$ and thus $\dim(W) = N$, hence $W = V$.
\end{proof}

\subsection{The feasible variety}


\begin{definition}\label{feas_var}
Let $(K,N,d_1,\ldots,d_K)$ be a feasible tuple, so that $\sum_i d_i = 2N$, and fix an $N$-dimensional complex vector space $V$.  We define the \emph{feasible variety} to be
\begin{equation}
\mathcal{Z} = \left\{(V_1,\ldots,V_K)\in \prod_{i = 1}^K G(d_i,N)\ |\ (V_1,\ldots,V_K) \text{ is a feasible strategy}\right\}
\end{equation}
where $G(d_i,N)$ denotes the Grassmannian of $d_i$-dimensional subspaces of $V\simeq \C^N$.  
\end{definition}

\begin{example}
Suppose that $K = 4$, $N = 2$, and $d_1 = \cdots = d_4 = 1$.  Constructing a feasible strategy means finding lines $V_1,\ldots,V_4$ in $\C^2$ satisfying (i) and (ii) of Definition \ref{feas}.  This is clearly equivalent to finding two distinct lines $L_1\neq L_2$ in $\C^2$, and then setting, for example $V_1 = V_2 = L_1$ and $V_3 = V_4 = L_2$ (or any other partition of the users into two groups of two).  The feasible variety can thus be identified with $(\mathbb{P}^1(\C)\times \mathbb{P}^1(\C)) - \Delta$, where $\Delta$ is the diagonal.
\end{example}

From now on we restrict to tuples of the form $(K,N,d_1,\ldots,d_K)$ such that $d_1 = \cdots = d_K = d$.  We will abbreviate these as $(K,N,d)$.  In this case $(K,N,d)$ is feasible if and only if $Kd = 2N$.  The following theorem classifies exactly when generic strategies $(V_1,\ldots,V_K)$ are feasible.  From an applications point of view this is useful as it makes clear exactly for which tuples $(K,N,d)$ the users can essentially pick arbitrary subspaces.

\begin{theorem}\label{feas_thm}
Suppose that $(K,N,d)$ is a feasible tuple.  Then $\mathcal{Z}$ is a dense, Zariski open subset of $\prod_{i = 1}^K G(d,N)$ if and only if $(K,N,d)$ is of the form $(2,N,N)$ or $(3,N,2N/3)$.
\end{theorem}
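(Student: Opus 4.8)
The plan is to show that genericity of feasibility is equivalent to a dimension count: $\mathcal{Z}$ is dense and Zariski open in $\prod_i G(d,N)$ exactly when a generic tuple $(V_1,\ldots,V_K)$ is already feasible. Since the conditions (i) in Definition \ref{feas} are automatic on the Grassmannian product, and since Proposition \ref{suff_prop} tells us feasibility is equivalent to conditions (ii) together with the vanishing of all triple intersections, the first step is to analyze these two conditions generically. The triple-intersection condition $V_i \cap V_j \cap V_k = 0$ is an \emph{open} condition, and for generic subspaces it holds automatically as soon as $3d \leq N$; the real content is therefore the closed condition (ii), namely $V_i = \bigoplus_{j \neq i} V_i \cap V_j$, which forces $V_i$ to be spanned by its pairwise intersections with the other $V_j$.

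The key step is to carry out the relevant dimension count. For generic $V_j$ with $j \neq i$, the expected dimension of $V_i \cap V_j$ inside $V$ is $\max(0, 2d - N)$, and the span $\sum_{j \neq i} V_i \cap V_j$ is generically a direct sum. Condition (ii) demands that this sum recover all of $V_i$, i.e.\ that $\sum_{j \neq i}\dim(V_i \cap V_j) = d$ while the intersection spaces are independent. First I would compute, using $Kd = 2N$, that the generic total dimension $\sum_{j\neq i}(2d - N) = (K-1)(2d-N)$ equals $d$ precisely when $(K-1)(2d-N) = d$; substituting $N = Kd/2$ and simplifying should single out the solutions $K = 2$ (giving $d = N$) and $K = 3$ (giving $2d = N$ after using $Kd = 2N$, i.e.\ $d = 2N/3$). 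For all other feasible tuples, the generic intersections are either too small (their span is a proper subspace of $V_i$) or, when $2d > N$ for larger $K$, too large to remain independent, so condition (ii) fails on a dense open set and hence $\mathcal{Z}$ is contained in a proper closed subvariety.

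For the two exceptional families I would verify both directions. In the case $(2,N,N)$ the condition (ii) reads $V_1 = V_1 \cap V_2$ and $V_2 = V_1 \cap V_2$, so feasibility just means $V_1 = V_2 = V$, which is the whole (single-point-per-factor) space and trivially dense open. In the case $(3,N,2N/3)$ the pairwise intersections $V_i \cap V_j$ generically each have dimension $2d - N = N/3$, and one checks that for generic triples these three intersections are independent and their direct sum is all of $V$, so (ii) and (iii) hold on a dense open set; here the Example with $K=3$, $N=3$, $d=2$ is the model case. To see openness and density in general, I would note that the defining failure of (ii) is the vanishing of a suitable minor (the rank of the matrix whose columns span $\sum_{j\neq i} V_i \cap V_j$ dropping below $d$), which is a Zariski-closed condition on the product of Grassmannians, so its complement is open, and nonemptiness of $\mathcal{Z}$ (guaranteed by the construction in Theorem \ref{feas_thm_tuple}) upgrades openness to density by irreducibility of the Grassmannian product.

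The main obstacle I expect is making the generic dimension count rigorous rather than merely expected. The formula $\dim(V_i \cap V_j) = \max(0, 2d-N)$ holds only for \emph{generic} pairs, and one must argue that the various pairwise intersections $V_i \cap V_j$ (over $j \neq i$) are mutually in general position \emph{inside} $V_i$ so that their span is genuinely a direct sum of the expected dimension; this independence is exactly what can fail and is the crux distinguishing the two good families from all others. I would handle this by an explicit incidence-variety or parameter-count argument, or by exhibiting one feasible strategy per good tuple (as in the Example) and invoking semicontinuity of the intersection dimensions to conclude that the good behavior persists on a dense open neighborhood.
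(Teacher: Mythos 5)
Your proposal follows essentially the same route as the paper's proof: the forward direction is the generic dimension count $d = \dim\bigoplus_{j\neq i}V_i\cap V_j = (K-1)(2d-N)$, which upon substituting $Kd = 2N$ gives $(K-2)(K-3) = 0$, and the converse is handled by the trivial case $(2,N,N)$ together with checking that generic triples satisfy the hypotheses of Proposition \ref{suff_prop} when $(K,N,d) = (3,N,2N/3)$. One correction: generic triple intersections $V_i\cap V_j\cap V_k$ vanish precisely when the codimensions sum to at least $N$, i.e.\ when $3d \leq 2N$, not $3d \leq N$ as you state; this threshold matters because the key case $(3,N,2N/3)$ sits exactly on the boundary $3d = 2N$, although your separate generic-position verification for $K=3$ covers it correctly in any event. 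Also, condition (ii) is constructible rather than closed, and your minor-vanishing argument for openness only makes sense after restricting to the open locus where $\dim(V_i\cap V_j)$ takes its generic value $2d-N$ (where the intersections vary algebraically) --- the semicontinuity remark in your last paragraph is the right repair, and in fact supplies a detail the paper itself leaves implicit.
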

\begin{proof}
Suppose first that we are given a feasible tuple $(K,N,d)$ such that $\mathcal{Z}$ is a dense, Zariski open subset of $\prod_{i = 1}^KG(d,N)$.  Then for generic choice of $(V_1,\ldots,V_K)$, we have
\begin{equation}
d = \dim(V_i) = \dim\bigoplus_{\substack{j = 1 \\ j \neq i}}^K V_i\cap V_j = (K-1)e
\end{equation}
where $e$ is the dimension of the intersection of two generic $d$-dimensional subspaces of $\C^N$.  If $2d \leq N$ then $e = 0$ and hence $d = 0$, thus we may assume $2d > N$.  In this case we have $e = 2d - N$, and hence
\begin{equation}
d = (K-1)(2d-N)
\end{equation}
Substituting $d = 2N/K$ and simplifying yields the equation
\begin{equation}
(K^2 - 5K + 6)N = 0
\end{equation}
and therefore $K = 2$ or $K = 3$.  The relation $Kd = 2N$ now gives the desired tuples.

For the converse, let us start with the tuple $(2,N,N)$.  In this case there is only one possible strategy, namely given by both users selecting $V_1 = V_2 = \C^N$, which is of course feasible.  In the other case of the tuple $(3,N,2N/3)$, it is easy to check that generic choices of $V_1,V_2,V_3$ satisfy the conditions of Proposition \ref{suff_prop}, which completes the proof of the theorem.
\end{proof}

\begin{corollary}
For $K = 3$ users, the feasible variety $\mathcal{Z}$ has dimension $2N^2/3$.
\end{corollary}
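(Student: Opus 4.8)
The plan is to deduce this immediately from Theorem \ref{feas_thm} together with the standard dimension formula for Grassmannians. The key observation is that for $K=3$ the relevant feasible tuple is $(3,N,2N/3)$ (so in particular $N$ must be divisible by $3$ for $d=2N/3$ to be an integer), and Theorem \ref{feas_thm} tells us precisely that in this case $\mathcal{Z}$ is a dense, Zariski open subset of the ambient product $\prod_{i=1}^3 G(d,N)$ with $d = 2N/3$.

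First I would invoke the fact that a nonempty Zariski open subset of an irreducible variety has the same dimension as the variety itself. Since each Grassmannian $G(d,N)$ is irreducible, the product $\prod_{i=1}^3 G(d,N)$ is irreducible, and hence the dense open subset $\mathcal{Z}$ satisfies $\dim\mathcal{Z} = \dim\prod_{i=1}^3 G(d,N)$.

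Next I would compute the dimension of the ambient product using $\dim G(d,N) = d(N-d)$, so that
\begin{equation}
\dim\prod_{i=1}^3 G(d,N) = 3\,d(N-d).
\end{equation}
Substituting $d = 2N/3$ gives $N - d = N/3$ and therefore $d(N-d) = (2N/3)(N/3) = 2N^2/9$, whence the product has dimension $3\cdot 2N^2/9 = 2N^2/3$, as claimed.

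There is essentially no substantive obstacle here, as the real content was already established in Theorem \ref{feas_thm}; the only points requiring a word of care are noting that $\mathcal{Z}$ is nonempty (which is implicit in its being a dense open subset, and is in any case guaranteed by the feasibility of the tuple) and that the dimension-of-a-dense-open-subset statement relies on irreducibility of the ambient product, which follows from irreducibility of each Grassmannian factor. I would state these two facts explicitly so that the short computation stands on firm footing.
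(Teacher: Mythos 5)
Your proposal is correct and follows essentially the same route as the paper's own proof: cite Theorem \ref{feas_thm} to identify $\mathcal{Z}$ as a dense Zariski open subset of $\prod_{i=1}^3 G(2N/3,N)$, and then apply $\dim G(d,N)=d(N-d)$ to get $3\cdot(2N/3)(N/3)=2N^2/3$. Your added remarks on irreducibility and nonemptiness are fine but merely make explicit what the paper leaves implicit.
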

\begin{proof}
Since $\mathcal{Z}$ is Zariski open and dense in $\prod_{i = 1}^3 G(2N/3,N)$, it has the same dimension as this variety.  Since $\dim G(d,N) = d(N-d)$, the result follows.
\end{proof}

\subsection{The complement of the feasible variety when $K = 3$}
When $(K,N,d)$ is a feasible tuple, Theorem~\ref{feas_thm} shows that a generic strategy is feasible exactly when $K = 2$ or $K = 3$.  When $K = 2$ the sufficient conditions for $(V_1,V_2)$ to be a feasible strategy given by Proposition \ref{suff_prop} are trivially satisfied. When $K = 3$, this raises the question about the complement of the feasible variety, which is Zariski closed, and hence can be written as the variety defined by an ideal in the function ring on $\prod_{i = 1}^3 G(d,N)$, where $d = 2N/3$.

For a nicely parametrized presentation of this ring, corresponding to a coordinate system on the Grassmannians, we will use the Pl\"ucker embedding
\begin{equation}
\iota: G(d,N)\hookrightarrow \mathbb{P}\left(\bigwedge^{d} \mathbb{C}^N\right).
\end{equation}
This is the natural embedding defined by
\begin{equation}
\iota : \mbox{Span}\{v_1,\cdots, v_d\}\mapsto \C (v_1\wedge\cdots\wedge v_d),
\end{equation} and whose image is the subvariety given by a number of quadratic equations in the $d\times n$ complex variables $\{v_1,\cdots, v_d\}$. These equations are known as the Pl\"ucker relations, and the ideal they generate is denoted $P_d$. 

In the ambient space $\mathbb{P}\left(\bigwedge^{d} \mathbb{C}^N\right)^3$, we can now write down the equations for variety of non-feasible strategies, denoted by $\bar{\mathcal{Z}}$. Indeed, we have
\begin{equation}\begin{split}
\bar{\mathcal{Z}} & = \left\{(V_1,V_2,V_3)\in \prod_{i = 1}^3 G(d,N)\ \big|\ \dim(V_1\cap V_2\cap V_3) > 0\right\}\\
& =\left\{(V_1,V_2,V_3)\in \prod_{i = 1}^3 G(d,N)\ \big|\  \exists w : w| \iota V_1, w| \iota V_2, w | \iota V_3\right\}.
\\& =\left\{(V_1,V_2,V_3)\in \prod_{i = 1}^3 G(d,N)\ \big|\  \exists w : w\otimes w\otimes w | \iota V_1\otimes\iota V_2\otimes \iota V_3\right\}.
\end{split}
\end{equation}
This means that $\iota V_1\otimes\iota V_2\otimes \iota V_3$ lies in the ideal generated by $N$ cubics, indexed by basis vectors of $\C^N$. Note that we have here used the standard identification of the exterior product with its dual. We have also identified the dual of a direct product of spaces with the tensor product of their duals \cite{Tensor_ref}.

We see from the above that $\bar{\mathcal{Z}}$ can be written as a cubic irreducible variety in the product of Grassmannians. The codimension of $\bar{\mathcal{Z}}$ is easily seen to be
\begin{equation}
2N-d-d-d+1 = 1.
\end{equation}

\begin{example}
Suppose that $K = N = 3$ and $d = 2$.  We have the identification $G(2,3)\simeq G(1,3) = \mathbb{P}^2(\mathbb{C})$ given by, for example, fixing an inner product and taking a plane to its perp space.  Under this identification, the feasible variety $\mathcal{Z}$ gets identified with
\begin{equation}
\mathcal{Z} = \{(L_1,L_2,L_3)\in \mathbb{P}^2(\mathbb{C})^3\ |\ \text{span}(L_1,L_2,L_3) = \C^3 \}
\end{equation}
If we give projective coordinates $[x_i:y_i:z_i]$ to the $i^{th}$ copy of $\mathbb{P}^2(\mathbb{C})$, then the complement $\bar{\mathcal{Z}}$ is therefore defined by the single equation
\begin{equation}
\det \left[
\begin{matrix}
x_1 & x_2 & x_3 \\
y_1 & y_2 & y_3 \\
z_1 & z_2 & z_3
\end{matrix}
\right] = 0
\end{equation}
That is, $\bar{\mathcal{Z}}$ is the determinantal variety in $\mathbb{P}^2(\mathbb{C})\times \mathbb{P}^2(\mathbb{C})\times \mathbb{P}^2(\mathbb{C})$, and hence defined by a single irreducible cubic.
\end{example}

\subsection{Dimension of the feasible variety}

Suppose we have fixed some feasible tuple $(K,N,d)$ and some relay space $V$.  To examine further the dimension of the feasible variety, let us further fix some set $\{d_{ij}\}_{1\leq i < j \leq K}$ of desired dimensions for the $V_i\cap V_j$.  In practical terms, this means that we are examining strategies in which users $i$ and $j$ are exchanging $d_{ij}$ information symbols with each other.  The assumption $V_i = \bigoplus_{j\neq i}V_i\cap V_j$ then forces $d = \dim(V_i) = \sum_{j\neq i}d_{ij}$ for all $i$.  Let us denote the corresponding variety of feasible strategies by $\mathcal{Z}^0$.

\begin{theorem}
Suppose that $(K,N,d)$ is a feasible tuple and we fix desired pairwise degrees of freedom $\{d_{ij}\}_{1\leq i < j \leq K}$.  Then the feasible variety $\mathcal{Z}^0$ of strategies is isomorphic to a dense, Zariski open subset of
\begin{equation}
\mathcal{S} = \prod_{i = 1}^{K-1}\prod_{j = i + 1}^K G(d_{ij},N),
\end{equation}
and therefore has dimension
\begin{equation}
\dim(\mathcal{Z}^0) = \dim(\mathcal{S}) = \sum_{i = 1}^{K-1}\sum_{j = i + 1}^K d_{ij}(N-d_{ij})
\end{equation}
\end{theorem}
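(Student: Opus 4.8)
The plan is to exhibit a mutually inverse pair of morphisms identifying $\mathcal{Z}^0$ with a dense Zariski open subset $U \subseteq \mathcal{S}$. The two natural candidates are the \emph{intersection} map $\psi : \mathcal{Z}^0 \to \mathcal{S}$ sending a feasible strategy $(V_1,\ldots,V_K)$ to the tuple $(V_i \cap V_j)_{1 \le i < j \le K}$, and the \emph{sum} map $\phi : \mathcal{S} \dashrightarrow \prod_i G(d,N)$ sending a tuple $(W_{ij})_{i<j}$ (writing $W_{ij} = W_{ji}$) to $(V_1,\ldots,V_K)$ where $V_i = \sum_{j \neq i} W_{ij}$. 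First I would record the crucial numerical identity: summing the constraint $d = \sum_{j \neq i} d_{ij}$ over all $i$ counts each $d_{ij}$ exactly twice, so $\sum_{i<j} d_{ij} = \tfrac12 K d = N$ by feasibility $Kd = 2N$. Thus the dimension available to $\bigoplus_{i<j} W_{ij}$ matches $\dim V = N$ exactly.

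Next I would show that a single condition controls everything. For $(W_{ij}) \in \mathcal{S}$, suppose the sum $\sum_{i<j} W_{ij}$ is direct (equivalently of dimension $N$, equivalently equal to $V$). Inside the resulting decomposition $V = \bigoplus_{i<j} W_{ij}$, each $V_i = \bigoplus_{j \neq i} W_{ij}$ is a subsum of dimension $\sum_{j\neq i} d_{ij} = d$, and two subsums $V_i$ and $V_j$ share precisely the common summand indexed by the pair $\{i,j\}$, so $V_i \cap V_j = W_{ij}$ has dimension $d_{ij}$. Conditions (ii) and (iii) of Definition~\ref{feas} then hold by construction, whence $\phi(W_{ij}) \in \mathcal{Z}^0$. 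Conversely, for any $(V_1,\ldots,V_K) \in \mathcal{Z}^0$, condition (iii) says $V = \bigoplus_{i<j} V_i \cap V_j$ is direct, so $\psi$ lands in the locus $U \subseteq \mathcal{S}$ on which the sum is direct. A direct check using conditions (ii) and (iii) gives $\psi \circ \phi = \mathrm{id}_U$ and $\phi \circ \psi = \mathrm{id}_{\mathcal{Z}^0}$ (the latter being exactly the statement $V_i = \bigoplus_{j\neq i} V_i \cap V_j$), so $\psi$ and $\phi$ are inverse bijections between $\mathcal{Z}^0$ and $U$.

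To upgrade this bijection to an isomorphism of varieties, I would verify that both maps are regular. On the locus where a finite family of subspaces has constant-dimensional intersection (respectively sum), the intersection-of-subspaces and sum-of-subspaces operations are morphisms of Grassmannians. On $\mathcal{Z}^0$ the intersections $V_i \cap V_j$ have constant dimension $d_{ij}$ by definition, and on $U$ the sums $\sum_{j\neq i} W_{ij}$ have constant dimension $d$; hence $\psi$ and $\phi$ are morphisms, and $\mathcal{Z}^0 \cong U$ as varieties.

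Finally I would show $U$ is dense and open in $\mathcal{S}$. Openness holds because ``$\dim \sum_{i<j} W_{ij} \ge N$'' is an open condition and $N$ is the maximal possible value of this dimension. For nonemptiness I would exhibit an explicit point: since $\sum_{i<j} d_{ij} = N$, I can partition a basis $v_1,\ldots,v_N$ of $V$ into pairwise disjoint blocks and let $W_{ij}$ be the span of the block of size $d_{ij}$, forcing $\bigoplus_{i<j} W_{ij} = V$. As $\mathcal{S}$ is a product of Grassmannians and hence irreducible, any nonempty open subset is dense. The dimension formula then follows from $\dim \mathcal{Z}^0 = \dim \mathcal{S} = \sum_{i<j} \dim G(d_{ij},N) = \sum_{i<j} d_{ij}(N - d_{ij})$. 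I expect the one genuinely technical point to be the regularity claim of the third step—confirming that sum and intersection are morphisms on constant-dimension loci—since the remainder is linear algebra inside a fixed direct-sum decomposition together with a standard irreducibility argument.
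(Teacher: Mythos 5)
Your proposal is correct and takes essentially the same approach as the paper: both identify $\mathcal{Z}^0$ with a dense, Zariski open subset of $\mathcal{S}$ via the mutually inverse intersection map $(V_1,\ldots,V_K)\mapsto (V_i\cap V_j)_{i<j}$ and sum map $(W_{ij})_{i<j}\mapsto \bigl(\sum_{j\neq i}W_{ij}\bigr)_i$. If anything, your write-up is tighter than the paper's: where the paper cuts out its open locus by the two conditions $\dim\bigl(\sum_{j\neq i}V_{ij}\bigr)=d$ and $V_{ij}\cap V_{kl}=0$ and asserts the remaining verifications are easy (appealing to Proposition \ref{suff_prop}), you isolate the single condition that the sum $\bigoplus_{i<j}W_{ij}$ be direct---which, given the identity $\sum_{i<j}d_{ij}=N$, cleanly implies all the feasibility conditions---and you explicitly supply the mutual-inverse, regularity, openness, and nonemptiness arguments that the paper leaves implicit.
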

\begin{proof}
Suppose that $(V_1,\ldots,V_K)$ is a feasible strategy.  We can map it to an element of $\mathcal{S}$ by the obvious map
\begin{equation}
(V_1,\ldots,V_K)\mapsto (V_i\cap V_j)_{1\leq i < j \leq K}
\end{equation}
Conversely, consider the set of all $(V_{ij})_{1\leq i < j \leq K}$ in $\mathcal{S}$ such that
\begin{itemize}
\item[(i)] $V_i := \sum_{j \neq i} V_{ij}$ has dimension $\dim(V_i) = d$ for all $i$, and
\item[(ii)] $V_{ij}\cap V_{kl} = 0$ unless $(i,j) = (k,l)$.
\end{itemize}
One can see easily that (i) and (ii) are generically satisfied, and thus this is a dense, Zariski open subset of $\mathcal{S}$.  Furthermore, the sum in (i) is easily seen to be generically direct, and thus the corresponding tuple $(V_1,\ldots,V_K)$ is a feasible strategy by Proposition \ref{suff_prop}.  The theorem follows easily.
\end{proof}

Let us compute the dimension of $\mathcal{Z}^0$ for some desired pairwise degrees of freedom which are of practical interest.

\begin{example}
Suppose that $d_{ij} = d_{kl}$ for all $(i,j) \neq (k,l)$, so that every pair of users is exchanging the same number of information symbols.  From $V_i = \bigoplus_{j\neq i} V_i\cap V_j$ it follows that $d_{ij} = d/(K-1)$ for all $i,j$.  Using the relation $Kd = 2N$ and the above theorem gives
\begin{align}
\dim(\mathcal{Z}^0) &= \binom{K}{2}d(N-d) \\
& = N^2\left(1 - \frac{2}{K(K-1)}\right)\ \text{whenever $d_{ij} = d/(K-1)$ for all $i,j$}
\end{align}
\end{example}

\begin{example}
Suppose that $K$ is even and we pair up the users so that for $i$ odd, only users $i$ and $i+1$ share their information symbols.  In other words,
\begin{equation}
d_{ij} = \left\{
\begin{array}{ll}
d & \text{$i$ odd and $j = i + 1$} \\
0 & \text{otherwise}
\end{array}
\right.
\end{equation}
Then using the above theorem and the equation $Kd = 2N$ gives
\begin{equation}
\dim(\mathcal{Z}^0) = \sum_{\substack{i = 1 \\ \text{$i$ odd}}}^K d(N-d) = N^2(1-2/K)
\end{equation}
\end{example}

\bibliography{myrefs_new.bib}

\end{document}